\newtheorem{prop}{Proposition}
\title{Kurtosis control in wavelet shrinkage with generalized secant hyperbolic prior}
\author{Alex Rodrigo dos Santos Sousa \\ University of São Paulo, Brazil}
\begin{document}

\numberwithin{equation}{section}
\numberwithin{table}{section}
\numberwithin{figure}{section}

 \maketitle
    \begin{abstract}
        
The present paper proposes a bayesian approach for wavelet shrinkage  with the use of a shrinkage prior based on the generalized secant hyperbolic distribution symmetric around zero in a nonparemetric regression problem. This shrinkage prior allows the control of the kurtosis of the coefficients, which impacts on the level of shrinkage on its extreme values. Statistical properties such as bias, variance, classical and bayesian risks of the rule are analyzed and performances of the proposed rule are obtained in simulations studies involving the Donoho-Johnstone test functions. Application of the proposed shrinker in denoising Brazilian stock market dataset is also provided.  
    \end{abstract}

\section{Introduction}

Wavelet shrinkers has been succesfully applied as estimators to wavelet coefficients in denoising data problems, such as nonparametric curve estimation, time series analysis, image processing, among others, see Donoho e Johnstone (1994, 1995), Donoho et al. (1995, 1996), Vidakovic (1998, 1999) for more details about wavelet shrinkage methods. In a bayesian setup, the assumption of a shrinkage prior distribution allows us to incorporate prior information about the wavelet coefficients of the wavelet transformed unknown function, such as sparsity, support (if they are bounded), local features, and others. Several priors have been proposed along the years, see Angelini et al. (2004), Remenyi et al. (2015), Karagiannis et al. (2015), Bhattacharya et al. (2015), Torkamani et al. (2017), Griffin et al. (2017), among others.

Although the already proposed priors are well succeeded to describe important features of the wavelet coefficients, as mentioned above, there is no available prior that allows us to incorporate prior kurtosis information of the coefficients. Kurtosis control can be interesting to model heavy tailed distributed coefficients and to perform wavelet shrinkage under this context. This work addresses to this problem, with the use of a particular prior distribution that allows kurtosis control of the coefficients by convenient choice of its hyperparameters.  

As a motivational example, we generated 512 wavelet coefficients from a heavy tailed distribution and show them by the histogram in Figure \ref{fig:hist} (a). Note that most of the coefficients are zero or near from zero, once the wavelet coefficients vector of a function under certain conditions is typically sparse but there are some coefficients with high magnitudes, one of them greater than 1500 for instance. In real datasets, one observes noisy versions of wavelet coefficients, called empirical wavelet coefficients. Figure \ref{fig:hist} (b) shows the histogram of associated empirical coefficients, which are the wavelet coefficients with normal noises added. The goals of a wavelet shrinker are to remove noise effect present on empirical coefficients and estimate the wavelet coefficients of Figure \ref{fig:hist} (a). Once the wavelet coefficients are heavy tailed distributed, it is necessary a bayesian wavelet shrinker that takes the kurtosis of the distribution into account in the shrinkage process to have a good estimation performance.   

\begin{figure}[H]
\centering
\subfigure[Wavelet coefficients \label{lognormal}]{
\includegraphics[scale=0.5]{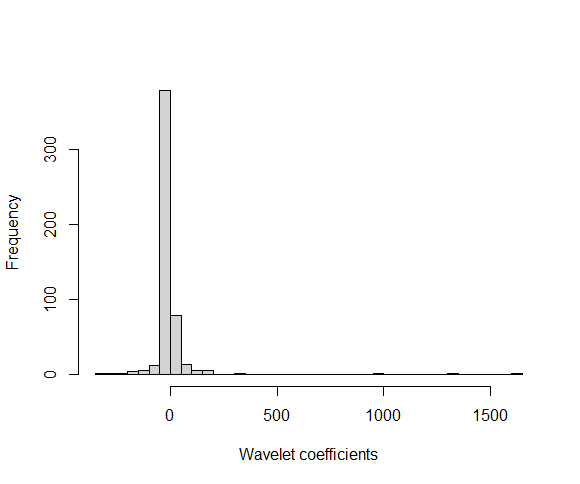}}
\subfigure[Empirical wavelet coefficients\label{blocls}]{
\includegraphics[scale=0.5]{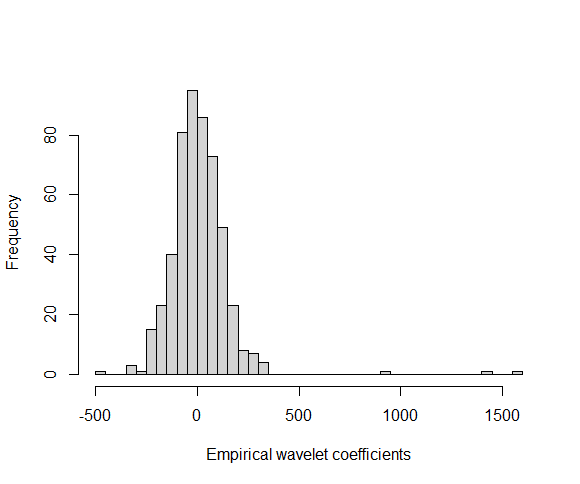}}
\caption{Histograms of generated wavelet coefficients with heavy tailed distribution and their empirical versions, with added normal noise.}\label{fig:hist}
\end{figure}

We propose the use of the generalized secant hyperbolic distribution defined by Vaughan (2002) as prior to the wavelet coefficients in a nonparametric curve estimation problem. The great advantage is the control of the kurtosis along the resolution levels as prior knowledge, which impact on the degree of shrinkage on the extremes of the empirical wavelet coefficients distribution throw convenient hyperparameters elicitation. We organize the paper defining the model in Section 2, the shrinkage rule under the generalized secant hyperbolic prior and its statistical properties are described in Section 3 and its hyperparameters elicitation discussed in Section 4. A simulation study to evaluate the performance of the proposed shrinker and compare it with standar techniques is presented in Section 5. Application of the shrinker under generalized secant hyperbolic prior in a Brazilian stock market index dataset is done in Section 6. Finally, conclusions are in Section 7.

\section{Statistical model description}

Let is consider the nonparametric regression problem of the form
\begin{equation}\label{eq:modeltime}
y_i = f(x_i) + e_i , \qquad i=1,...,n=2^J, J \in \mathbb{N},
\end{equation}
\noindent where $x_i \in [0,1]$, $i=1,...,n$, $f \in \mathbb{L}_2(\mathbb{R})= \{f:\int f^2 < \infty\}$, and $e_i$, $i=1,...,n$, are zero mean independent normal random variables with unknown variance $\sigma^2$. In vector notation, we have
\begin{equation}\label{modeltimevec}
\boldsymbol{y} = \boldsymbol{f} + \boldsymbol{e},
\end{equation}

\noindent where $\boldsymbol{y} = (y_1,...,y_n)'$, $\boldsymbol{f} = (f(x_1),...,f(x_n))'$ and $\boldsymbol{e} = (e_1,...,e_n)'$. The goal is to estimate de unknown function $f$. The standard procedure is to apply a discrete wavelet transform (DWT) on \eqref{modeltimevec}, represented by an orthogonal matrix $\boldsymbol{W}$, to obtain the following model in the wavelet domain,
\begin{equation} \label{modelvec}
\boldsymbol{d} = \boldsymbol{\theta} + \boldsymbol{\epsilon},
\end{equation}
where $\boldsymbol{d}=\boldsymbol{W}\boldsymbol{y}$, $\boldsymbol{\theta}=\boldsymbol{W}\boldsymbol{f}$ and $\boldsymbol{\epsilon}=\boldsymbol{W}\boldsymbol{e}$. For a specific coefficient of the vector $\boldsymbol{d}$, we have the simple model 
\begin{equation}\label{model}
d = \theta + \epsilon,
\end{equation}
\noindent where $d$ is the empirical wavelet coefficient, $\theta \in \mathbb{R}$ is the wavelet coefficient to be estimated and $\epsilon \sim N(0,\sigma^2)$ is the normal random error with unknown variance $\sigma^2$. Since the method works coefficient by coefficient, we extract the typical resolution level and location subindices of $d$, $\theta$ and $\epsilon$ for simplicity. Note that, according to the model \eqref{model}, $d|\theta \sim N(\theta,\sigma^2)$ and then, the problem of estimating a function $f$ becomes a normal location parameter estimation problem in the wavelet domain for each coefficient, with posterior estimation of $f$ by the inverse discrete wavelet transform (IDWT), i.e,  $\boldsymbol{\hat{f}} = \boldsymbol{W^{T}}\boldsymbol{\hat{\theta}}$. 

For bayesian estimation of $\theta$, we propose the following shrinkage prior distribution for $\theta$,
\begin{equation}\label{prior}
\pi(\theta;\alpha,\tau,t) = \alpha \delta_{0}(\theta) + (1-\alpha)g(\theta;\tau,t),
\end{equation}
where $\alpha \in (0,1)$, $\delta_{0}(\theta)$ is the point mass function at zero and $g(\theta;\tau,t)$ is the generalized secant hyperbolic (GSH) distribution symmetric around zero defined by Vaughan (2002), for $\tau > 0$ and $t > -\pi$ as
\begin{equation} \label{eq:den}
g(\theta;\tau,t) = \frac{c_1}{\tau}.\frac{\exp\{c_2.\frac{\theta}{\tau}\}}{\exp\{2c_2.\frac{\theta}{\tau}\}+2a\exp\{c_2.\frac{\theta}{\tau}\}+1}\mathbb{I}_{\mathbb{R}}(\theta),
\end{equation} 
where for $-\pi < t < 0$,
$$ a = cos(t), c_2 = \sqrt{(\pi^2-t^2)/3}, c_1 = \frac{sin(t)}{t}.c_2$$
and for $t>0$,
$$ a = cosh(t), c_2 = \sqrt{(\pi^2+t^2)/3}, c_1 = \frac{sinh(t)}{t}.c_2,$$

\noindent where $\mathbb{I}_{\mathbb{R}}(\cdot)$ is the indicator function. Note that for $t=-\pi/2$, the GSH is equivalent to the usual secant hyperbolic distribution, for $t \rightarrow 0$, GSH tends to the logistic distribution and when $t \rightarrow \infty$, GSH tends to the uniform distribution. In fact, the hyperparameter $t$ controls the kurtosis of the distribution. Vaughan shows that if $\beta = E(\theta^4)$ is the usual coefficient of kurtosis of the GSH with zero mean and $\tau=1$, then $\beta = (21\pi^2 - 9t^2)/(5\pi^2-5t^2)$ for $-\pi<t<0$ and $\beta=(21\pi^2+9t^2)/(5\pi^2+5t^2)$ for $t>0$. As t increases, the coefficient of kurtosis decreases. This allows us to control the level of shrinkage of the wavelet coefficients on the extremes values of the empirical wavelet coefficients $d$. The hyperparameters $\alpha$ and $\tau$ control the shrinkage level for empirical wavelet coefficients sufficiently close to zero. Figure \ref{fig:GSH} (a) and (b) shows some densities of the GSH distribution defined in \eqref{eq:den} and the coefficient of kurtosis as function of the parameter $t$ respectively.

\begin{figure}[H]
\centering
\subfigure[Generalized secant hyperbolic (GSH) densities.]{
\includegraphics[scale=0.4]{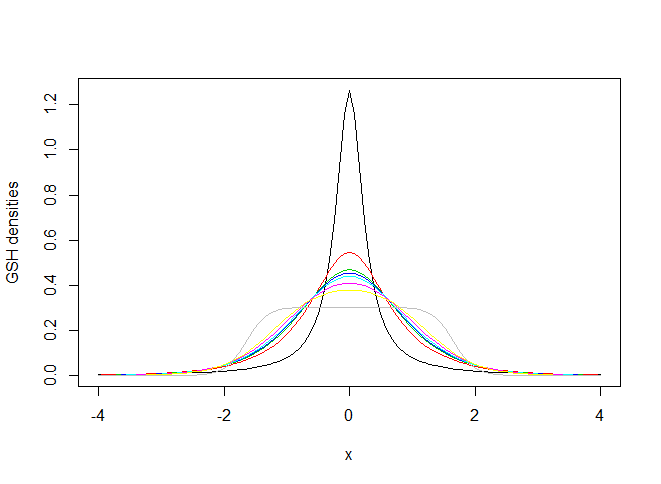}}
\subfigure[Coefficient of kurtosis.\label{kurt}]{
\includegraphics[scale=0.4]{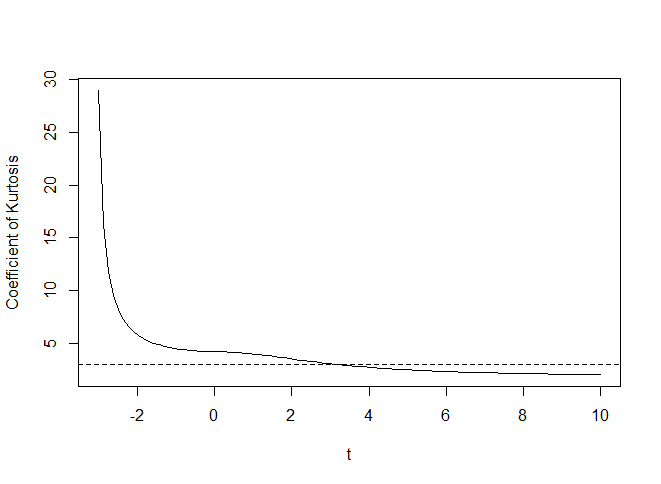}}
\caption{GSH densities and the coefficient of kurtosis as function of the GSH parameter $t$ (dashed line at coefficient of kurtosis equal to 3 - mesokurtic case).}\label{fig:GSH}
\end{figure}

\section{The shrinkage rule and its properties }
To obtain the bayesian shrinkage rule under the prior \eqref{prior}, $\delta(\cdot)$, we assume the quadratic loss function $L(\delta,\theta) = (\delta - \theta)^2$. It is well known that, under this loss function, the Bayes rule is the posterior expected value of $\theta$, i.e, $\delta(d) = E_{\pi}(\theta \mid d)$. Simple calculation proves the following proposition.

\begin{prop} 
If the prior distribution of $\theta$ is of the form $\pi(\theta;\alpha,\tau) = \alpha \delta_{0}(\theta) + (1-\alpha)g(\theta;\tau)$, where $g$ is a density function with support in $\mathbb{R}$, then the shrinkage rule under the quadratic loss function and model (2.4) is 
\begin{equation}
\delta(d) = \frac{(1-\alpha)\int_\mathbb{R}(\sigma u + d)g(\sigma u +d ; \tau)\phi(u)du}{\frac{\alpha}{\sigma}\phi(\frac{d}{\sigma})+(1-\alpha)\int_\mathbb{R}g(\sigma u +d ; \tau)\phi(u)du},
\end{equation}
where $\phi(\cdot)$ is the standard normal density function. 
\end{prop}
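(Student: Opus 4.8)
The plan is to compute the posterior mean $E_\pi(\theta \mid d)$ directly from Bayes' theorem. Since the Bayes rule under quadratic loss is the posterior expectation, I would write $\delta(d) = \int_\mathbb{R} \theta\, \pi(\theta \mid d)\, d\theta$, where the posterior density is proportional to the likelihood times the prior. The key structural feature to exploit is that the prior in \eqref{prior} is a mixture of a point mass at zero and the continuous GSH density $g$. Because $d \mid \theta \sim N(\theta,\sigma^2)$, the likelihood is $\frac{1}{\sigma}\phi\!\left(\frac{d-\theta}{\sigma}\right)$, and I would split both the numerator and the denominator of the posterior mean into the contribution from the atom $\alpha\delta_0(\theta)$ and the contribution from $(1-\alpha)g(\theta;\tau)$.

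The first main step is to handle the atom. The marginal density of $d$ is $m(d) = \int_\mathbb{R} \frac{1}{\sigma}\phi\!\left(\frac{d-\theta}{\sigma}\right)\pi(\theta)\,d\theta$. The point mass contributes $\frac{\alpha}{\sigma}\phi\!\left(\frac{d}{\sigma}\right)$ to $m(d)$ (by the sifting property of $\delta_0$), which is exactly the first term in the denominator. Crucially, since $\theta=0$ on the atom, the atom contributes nothing to the numerator $\int_\mathbb{R}\theta\,\frac{1}{\sigma}\phi\!\left(\frac{d-\theta}{\sigma}\right)\pi(\theta)\,d\theta$ — this is why the numerator has no $\alpha$-atom term and explains the asymmetry between numerator and denominator in the stated formula. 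The continuous part contributes $(1-\alpha)\int_\mathbb{R}\frac{1}{\sigma}\phi\!\left(\frac{d-\theta}{\sigma}\right)g(\theta;\tau)\,d\theta$ to the denominator and $(1-\alpha)\int_\mathbb{R}\theta\,\frac{1}{\sigma}\phi\!\left(\frac{d-\theta}{\sigma}\right)g(\theta;\tau)\,d\theta$ to the numerator.

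The second step is the change of variables that produces the convolution form in the statement. Setting $u = (\theta - d)/\sigma$, so that $\theta = \sigma u + d$ and $d\theta = \sigma\, du$, transforms $\frac{1}{\sigma}\phi\!\left(\frac{d-\theta}{\sigma}\right)d\theta$ into $\phi(u)\,du$ (using the symmetry of $\phi$), and replaces $g(\theta;\tau)$ by $g(\sigma u + d;\tau)$ and $\theta$ by $(\sigma u + d)$. Applying this substitution to both integrals reproduces exactly the numerator $(1-\alpha)\int_\mathbb{R}(\sigma u + d)g(\sigma u + d;\tau)\phi(u)\,du$ and the continuous part of the denominator. Dividing the numerator by the full marginal $m(d)$ yields the claimed expression. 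I expect no genuine obstacle here, since the result follows from routine manipulation; the only point requiring care is the correct bookkeeping of the $\delta_0$ atom — namely that it survives in the denominator but vanishes in the numerator — and ensuring the Jacobian and the symmetry $\phi(-u)=\phi(u)$ are applied consistently so the standard normal density appears cleanly as $\phi(u)$.
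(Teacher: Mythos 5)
Your proposal is correct and follows essentially the same route as the paper: both compute the posterior mean as a ratio, split the mixture prior so that the $\delta_0$ atom contributes $\frac{\alpha}{\sigma}\phi(d/\sigma)$ to the denominator and nothing to the numerator, and then apply the substitution $u=(\theta-d)/\sigma$ to obtain the stated convolution form. No differences worth noting.
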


\begin{proof}
If $\mathcal{L}(\cdot \mid \theta)$ is the likelihood function and substituting $u=(\theta-d)/\sigma$, we have 
\begin{align*}
\delta(d) &= E_{\pi}(\theta \mid d) \\ 
          &=\frac{\int_{\mathbb{R}}\theta[\alpha\delta_{0}(\theta)+(1-\alpha)g(\theta;\tau)]\mathcal{L}(d \mid \theta)d\theta}{\int_{\mathbb{R}}[\alpha\delta_{0}(\theta)+(1-\alpha)g(\theta;\tau)]\mathcal{L}(d \mid \theta)d\theta} \\
          &= \frac{(1-\alpha)\int_{\mathbb{R}}\theta g(\theta;\tau)\frac{1}{\sqrt{2\pi}}\exp\{-\frac{1}{2}(\frac{d-\theta}{\sigma})^2\}\frac{d\theta}{\sigma}}{\alpha \frac{1}{\sigma\sqrt{2\pi}}\exp\{-\frac{1}{2}(\frac{d}{\sigma})^2\}+(1-\alpha)\int_{\mathbb{R}}g(\theta;\tau)\frac{1}{\sqrt{2\pi}}\exp\{-\frac{1}{2}(\frac{d-\theta}{\sigma})^2\}\frac{d\theta}{\sigma}}\\
          &= \frac{(1-\alpha)\int_{\mathbb{R}}(\sigma u + d)g(\sigma u + d;\tau)\phi(u)du}{\alpha \frac{1}{\sigma}\phi(\frac{d}{\sigma})+(1-\alpha)\int_{\mathbb{R}}g(\sigma u + d;\tau)\phi(u)du}.\\
\end{align*}
\end{proof}

Figure \ref{fig:rules} (a) shows the shrinkage rule under GSH prior \eqref{prior} and \eqref{eq:den}  obtained numerically for some values of $t$ in the interval $(-3,10)$, $\tau=1$ and $\alpha = 0.9$. As the hyperparameter $t$ increases, the rule $\delta$ shrinks more the coefficients on the extremes, once the kurtosis of the GSH decreases. In the curve estimation point of view, the impact of higher level of shrinkage on the extreme empirical wavelet coefficients occurs in the local features of the curve, such peaks estimation for example. In general, one can see the $\alpha$ and $\tau$ hyperparameters as usual controllers of the regularity of the curve and $t$ hyperparameter as local feature controller of the curve to be estimated, which is a novelty in terms of bayesian wavelet shrinkage.   

Figures \ref{fig:rules} (b) and \ref{fig:var} show the squared biases, variances and classical risks of the shrinkage rules considered in Figure \ref{fig:rules} (a). The properties have the same graphical symmetric behaviour, i.e, they are close to zero for $\theta$ close to zero, then there are peaks at the points when the shrinkage rule stops to shrink close to zero the empirical coefficients (in the plot, when $|\theta|$ is between 3 and 4) and finally they decrease and stabilize.

\begin{figure}[H]
\centering
\subfigure[Shrinkage rules\label{lognormal}]{
\includegraphics[scale=0.4]{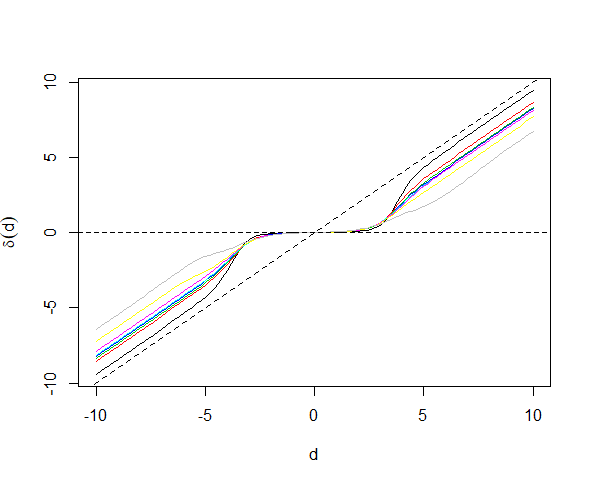}}
\subfigure[Squared Bias\label{blocls}]{
\includegraphics[scale=0.4]{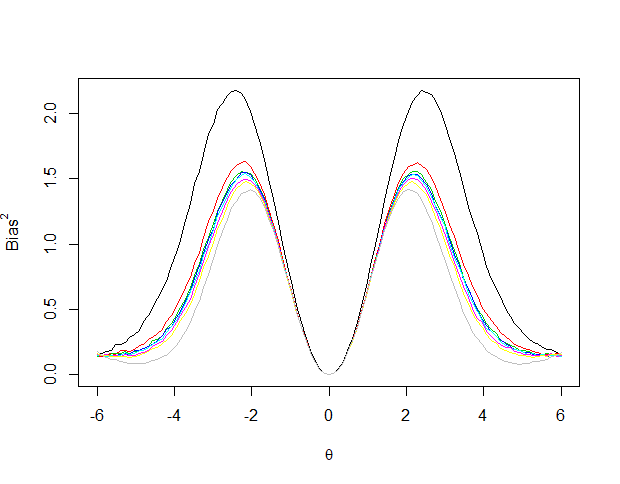}}
\caption{Shrinkage rules and their squared biases under GSH prior for some values of $t$, $\tau=1$  and $\alpha = 0.9$.}\label{fig:rules}
\end{figure}

\begin{figure}[H]
\centering
\subfigure[Variances\label{lognormal}]{
\includegraphics[scale=0.4]{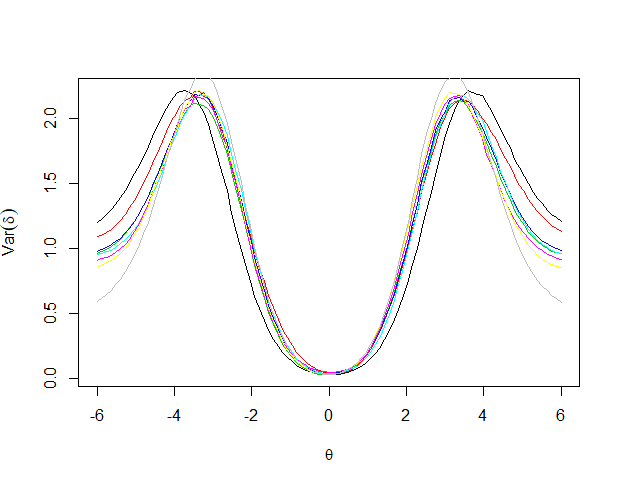}}
\subfigure[Classical Risks\label{blocls}]{
\includegraphics[scale=0.4]{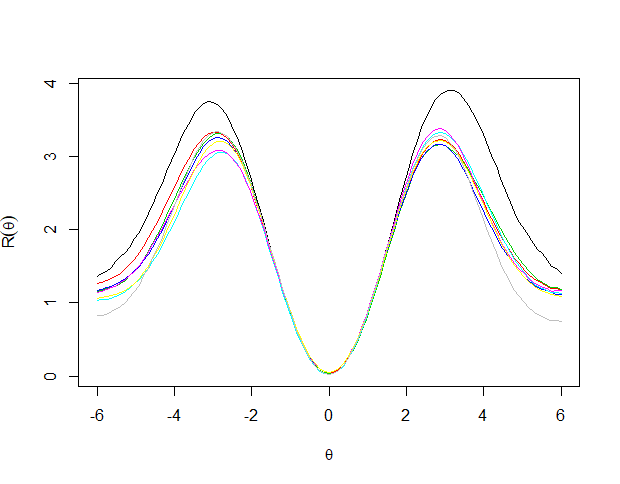}}
\caption{Variances and classical risks $R(\theta)$ of the shrinkage rule under GSH prior for some values of $t$, $\tau=1$  and $\alpha = 0.9$.}\label{fig:var}
\end{figure}

Tables \ref{tab:briskt} and \ref{tab:briska} show the bayes risks $r_\delta$ of the shrinkage rules for some values of $t$ and $\alpha$ respectively. In both cases, there is not a monotonical relationship observed between the risk and the considered hyperparameters but the risks themselves are close to zero for the hyperparameters values considered.

\begin{table}[H]
\centering
\label{my-label}
\begin{tabular}{lllllllll}
\hline
$t$ & -3   & -2  & -1  & 0.1 & 1 & 2 & 3 & 10   \\ \hline
$r_\delta$ & 0.125 & 0.329 & 0.223 & 0.235 & 0.183 & 0.221 & 0.240 & 0.247 \\ \hline
\end{tabular}
\caption{Bayes risks of the shrinkage rules under GSH prior for some values of $t$, $\alpha = 0.9$ and $\tau = 1$.} \label{tab:briskt}
\end{table}

\begin{table}[H]
\centering
\label{my-label}
\begin{tabular}{llllll}
\hline
$\alpha$ & 0.6   & 0.7  &  0.8  & 0.9 & 0.99    \\ \hline
$r_\delta$ & 0.744 & 0.762 & 0.129 & 0.245 & 0.03  \\ \hline
\end{tabular}
\caption{Bayes risks of the shrinkage rules under GSH prior for some values of $\alpha$ , $t = 3$ and $\tau=1$.}\label{tab:briska}
\end{table}

\section{Parameters elicitation}
The elicitations of the hyperparameters $\alpha$, $t$ and $\tau$ and the parameter $\sigma$ are crucial to our bayesian procedure be well succeed. In fact, there already are propositions well accepted in the area for values of $\sigma$ and $\alpha$. Based on the fact that much of the noise information present in the data can be obtained on the finest resolution scale, for the robust $\sigma$ estimation, Donoho and Johnstone (1994) suggested the robust estimator
\begin{equation}\label{eq:sigma}
\hat{\sigma} = \frac{\mbox{median}\{|d_{J-1,k}|:k=0,...,2^{J-1}\}}{0.6745}.
\end{equation}
Angelini et al. (2004) suggested the hyperparameter $\alpha$ be dependent on the resolution level $j$ according to the expression
\begin{equation}\label{eq:alpha}
\alpha = \alpha(j) = 1 - \frac{1}{(j-J_{0}+1)^\gamma},
\end{equation}
where $J_0$ is the primary resolution level and $\gamma > 0$. They also suggested that in the absence of additional information, $\gamma = 2$ can be adopted. 

For the hyperparameter $t$, we suggest using its relationship with the coefficient of kurtosis obtained by Vaughan, but adopting the sample coefficient of kurtosis of the empirical wavelet coefficients as estimator of the kurtosis, i.e,
\begin{equation}\label{eq:t}
t=
\begin{cases}
-\pi\sqrt{\frac{5\hat{\beta}-21}{5\hat{\beta}-9}},&\mbox{if}\quad \hat{\beta}\geq 4.2,\\
\pi\sqrt{\frac{21-5\hat{\beta}}{5\hat{\beta}-9}}, &\mbox{if}\quad \hat{\beta}<4.2,
\end{cases}
\end{equation} 

\noindent where $\hat{\beta}$ is the sample coefficient of kurtosis of the empirical wavelet coefficients, i.e, $\hat{\beta} = (1/n)\sum_{i} (d_i - \bar{d})^{4} / [(1/n)\sum_{i} (d_i - \bar{d})^{2}]^2$.

The hyperparameter $\tau$ has the same impact as the hyperparameter $\alpha$ in the shrinkage procedure, since it controls the shape of the GSH distribution around zero in our context. In this sense, we consider $\tau=1$ and control the level of shrinkage for empirical coefficients around zero by $\alpha$ elicitation in our study.

\section{Simulation studies}

Simulation studies were done to evaluate the performance in terms of averaged mean square error (AMSE) of the shrinkage rule under GSH prior and compare it with well known methods in the literature and practice. To this goal, the so called Donoho and Johnstone test functions, Bumps, Blocks, Doppler and Heavisine, see Donoho and Johnstone (1994), were used to simulate data in three sample sizes, $n = 512, 1024, 2048$ and three signal to noise ratios (SNR), SNR $=3, 5,7$. The Daubechies wavelets with ten null moments  (Daub10) were  applied to the generated data.

We compare the shrinkage rule under GSH prior (GSH rule) with fixed $\tau = 1$ and $\alpha$ and $t$ according to \eqref{eq:alpha} and \eqref{eq:t} with the methods Cross Validation (CV) of Nason (1996), False Discovery Rate (FDR) of Abramovich and Benjamini (1996), Stein Unbiased Risk Estimator (SURE) of Donoho and Johnstone (1995), Large Posterior Mode (LPM) of Cutillo et al. (2008) and Amplitude-scale invariant Bayes Estimator (ABE) of Figueiredo and Nowak (2001). 

We used the mean squared error (MSE), $MSE = \frac{1}{n} \sum_{i=1}^{n}[{\hat f(x_i)} - f(x_i)]^2$ as performance measure of the shrinkage rules. For each function, the process was repeated $M = 200$ times and a comparison measure, the average of the obtained MSEs, $AMSE = \frac{1}{M} \sum_{j=1}^{M}MSE_j$, was calculated for each rule as shown in Tables \ref{tab:amse1} and \ref{tab:amse2}. The estimates of the curves by the shrinkage rules for $n=2048$ are shown in Figure \ref{fig:curves}.

In general, the performance of the shrinkage rule under GSH prior was great in the simulations, with the lower or the second lower AMSE in almost all the scenarios. The rule sometimes lost to ABE method, mainly for Bumps and Blocks functions, but for Doppler and Heavisine, our rule had the best performance in terms of AMSE. Further, the AMSE decreased as the SNR decreased and/or de sample size increased, as expected.

\begin{table}[H]
\scalefont{0.5}
\centering
\label{my-label}
\begin{tabular}{|c|c|c|c|c|c|||c|c|c|c|c|c|}
\hline
Signal & n & Method & SNR = 3 & SNR = 5 & SNR =7 & Signal & n & Method & SNR = 3 & SNR = 5 & SNR =7   \\ \hline

Bumps&512	&	Univ	&	11.114	&	5.178	&	3.027	& Blocks	&512	&	Univ	&	6.928	&	3.676	&	2.262	\\
&	&	CV	&	11.466	&	9.488	&	6.297	&	&	&	CV	&	2.555	&	1.247	&	0.843	\\
&	&	FDR	&	9.290	&	4.323	&	2.633	&	&	&	FDR	&	5.904	&	2.919	&	1.747	\\
&	&	SURE	& 3.666	&	1.562	&	0.892	&	&	&	SURE	&	2.812	&	1.224	&	0.691	\\
&	&	LPM	&	5.446	&	1.948	&	1.000	&	&	&	LPM	&	5.417	&	1.959	&	0.995	\\
&	&	ABE	&	\textbf{2.713}	&\textbf{1.055}	&\textbf{0.577}	&	&	&	ABE	&\textbf{2.289}	&\textbf{0.930}	&	\textbf{0.480}	\\
&	&	GSH rule	& \textbf{3.150} &\textbf{1.120}	&\textbf{0.598}	&	&	&	GSH rule	&	2.979	&\textbf{1.186}	&\textbf{0.567}	\\ \hline
																			
&1024	&	Univ	&	7.556	&	3.576	&	2.142	&	&1024	&	Univ	&	4.879	&	2.488	&	1.548	\\
&	&	CV	&	2.942	&	1.935	&	1.739	&	&	&	CV	&	1.794	&	0.842	&	0.537	\\
&	&	FDR	&	5.608	&	2.543	&	1.478	&	&	&	FDR	&	3.888	&	1.867	&	1.125	\\
&	&	SURE	&	2.512	&	1.060	&	0.592	&	&	&	SURE	&	1.900	&	0.843	&	0.484	\\
&	&	LPM	&	5.469	&	1.969	&	0.999	&	&	&	LPM	&	5.457	&	1.964	&	1.005	\\
&	&	ABE	&\textbf{1.879}	&\textbf{0.751}	&\textbf{0.401}	&	&	&	ABE	&\textbf{1.549}	&\textbf{0.644}	&	\textbf{0.349}	\\
&	&	GSH rule&\textbf{2.158}	&\textbf{0.814}	&\textbf{0.425}	&	&	&	GSH rule	& 1.910	&\textbf{0.760}	&	\textbf{0.407}	\\ \hline
																			
&2048	&	Univ	&	5.056	&	2.349	&	1.392	&	&2048	&	Univ	&	3.423	&	1.775	&	1.103	\\
&	&	CV	&	1.606	&	0.734	&	0.480	&	&	&	CV	&	1.304	&	0.589	&	0.352	\\
&	&	FDR	&	3.594	&	1.598	&	0.920	&	&	&	FDR	&	2.681	&	1.292	&	0.766	\\
&	&	SURE	&	1.659	&	0.699	&	0.390	&	&	&	SURE	&	1.359	&	0.602	&	0.341	\\
&	&	LPM	&	5.442	&	1.959	&	0.999	&	&	&	LPM	&	5.425	&	1.953	&	0.996	\\
&	&	ABE	& \textbf{1.253}	&\textbf{0.497}	&\textbf{0.267}	&	&	&	ABE	&\textbf{1.141}	&\textbf{0.464}	&	\textbf{0.250}	\\
&	&	GSH rule&\textbf{1.335}	&\textbf{0.508}	&\textbf{0.271}	&	&	&	GSH rule	&\textbf{1.298}	&\textbf{0.531}	&\textbf{0.278}	\\ \hline

\end{tabular}
\caption{AMSE of the shrinkage/thresholding rules in the simulation study for Bumps and Blocks DJ-test functions.}\label{tab:amse1}
\end{table} 

\begin{table}[H]
\scalefont{0.5}
\centering
\label{my-label}
\begin{tabular}{|c|c|c|c|c|c|||c|c|c|c|c|c|}
\hline
Signal & n & Method & SNR = 3 & SNR = 5 & SNR =7 & Signal & n & Method & SNR = 3 & SNR = 5 & SNR =7   \\ \hline

Doppler&512	&	Univ	&	2.639	&	1.383	&	0.885	&Heavisine	&512	&	Univ	&	0.565	&	0.403	&	0.302	\\
&	&	CV	&	1.282	&	0.632	&	0.444	&	&	&	CV	&	0.498	&	0.275	&	0.173	\\
&	&	FDR	&	2.542	&	1.247	&	0.760	&	&	&	FDR	&	0.594	&	0.436	&	0.308	\\
&	&	SURE	&	1.323	&	0.578	&	0.334	&	&	&	SURE	&	0.568	&	0.414	&	0.313	\\
&	&	LPM	&	5.472	&	1.956	&	1.005	&	&	&	LPM	&	5.418	&	1.956	&	0.995	\\
&	&	ABE	&\textbf{1.161}	&\textbf{0.470}	&\textbf{0.255}	&	&	&	ABE	&	0.702	&	0.311	&	0.177	\\
&	&	GSH rule	&\textbf{1.162}	&\textbf{0.500}	&\textbf{0.264}	&	&	&	GSH rule	&\textbf{0.472}	&\textbf{0.252}	&\textbf{0.159}	\\ \hline
																			
&1024	&	Univ	&	1.632	&	0.848	&	0.533	&	&1024	&	Univ	&	0.455	&	0.313	&	0.230	\\
&	&	CV	&	0.806	&	0.369	&	0.219	&	&	&	CV	&	0.365	&	0.198	&	0.125	\\
&	&	FDR	&	1.543	&	0.752	&	0.453	&	&	&	FDR	&	0.499	&	0.322	&	0.225	\\
&	&	SURE	&	0.836	&	0.381	&	0.223	&	&	&	SURE	&	0.459	&	0.320	&	0.238	\\
&	&	LPM	&	5.443	&	1.959	&	1.000	&	&	&	LPM	&	5.439	&	1.958	&	1.001	\\
&	&	ABE	&	0.810	&	0.333	&	0.186	&	&	&	ABE	&	0.592	&	0.245	&	0.135	\\
&	&	GSH rule	&\textbf{0.711}	&\textbf{0.295}	&\textbf{0.171}	&	&	&	GSH rule	&\textbf{0.340}	&\textbf{0.176}	&\textbf{0.106}	\\ \hline
																			
&2048	&	Univ	&	1.160	&	0.583	&	0.367	&	&2048	&	Univ	&	0.362	&	0.235	&	0.166	\\
&	&	CV	&	0.559	&	0.254	&	0.147	&	&	&	CV	&	0.268	&	0.142	&	0.088	\\
&	&	FDR	&	1.030	&	0.487	&	0.291	&	&	&	FDR	&	0.398	&	0.234	&	0.157	\\
&	&	SURE	&	0.578	&	0.259	&	0.148	&	&	&	SURE	&	0.365	&	0.239	&	0.170	\\
&	&	LPM	&	5.431	&	1.955	&	0.997	&	&	&	LPM	&	5.439	&	1.958	&	0.999	\\
&	&	ABE	&	0.634	&	0.251	&	0.133	&	&	&	ABE	&	0.515	&	0.204	&	0.110	\\
&	&	GSH rule	&\textbf{0.423}	&\textbf{0.189}	&\textbf{0.104}	&	&	&	GSH rule	&\textbf{0.238}	&\textbf{0.120}	&\textbf{0.069}	\\ \hline

\end{tabular}
\caption{AMSE of the shrinkage/thresholding rules in the simulation study for Doppler and Heavisine DJ-test functions.}\label{tab:amse2}
\end{table} 

\begin{figure}[H]
\subfigure[SNR = 3\label{lognormal}]{
\includegraphics[scale=0.4]{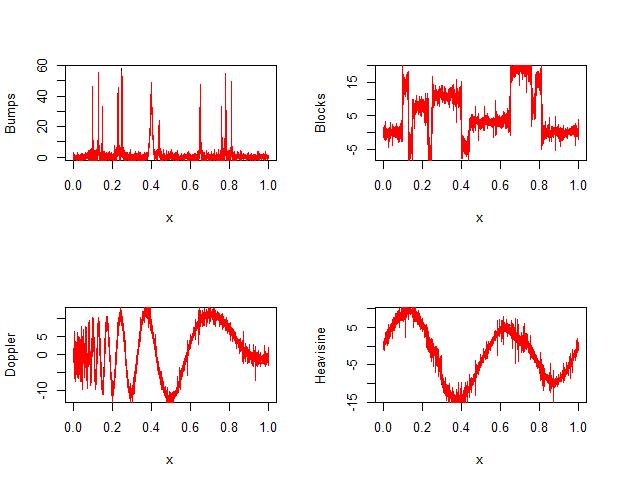}}
\subfigure[SNR = 5\label{blocls}]{
\includegraphics[scale=0.4]{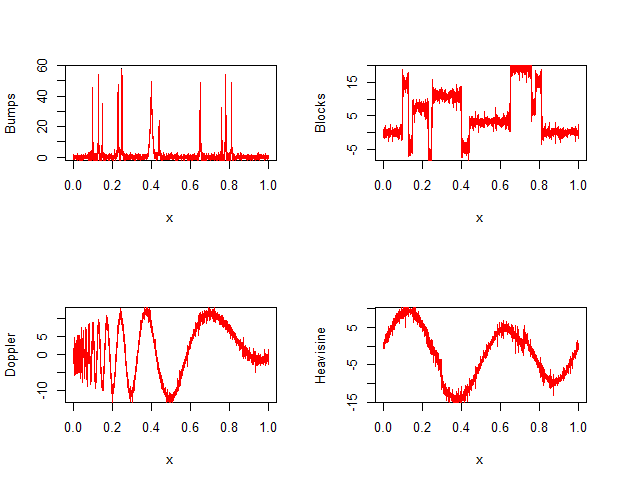}}
\subfigure[SNR = 7\label{blocls}]{
\includegraphics[scale=0.4]{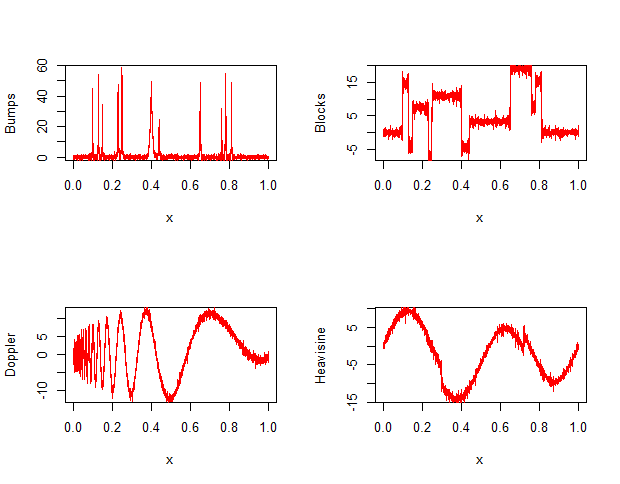}}
\caption{Curve estimates of the shrinkage rule under GSH prior with $\tau=1$ for $n=2048$.}\label{fig:curves}
\end{figure}

\section{Real data application}
To verify the performance of the GSH shrinkage rule in a real data, we consider a São Paulo stock market index (IBOVESPA) time series collected daily from January, 1st 2019 to January, 29th 2021 ($n=512$), which is showed in Figure \ref{fig:ibov} (a). After a DWT application, with a Daubechies basis with 10 null moments, we obtain the empirical wavelet coefficients, showed by resolution level in Figure \ref{fig:ibov} (b). Denoising is important in this context to detect sistematic changes of the IBOVESPA. In this sense, wavelet shrinkage is one of the most used denoising tools in time series.

Figures \ref{fig:ibov2} (a) and (b) shows denoised IBOVESPA time series after GSH shrinkage rule application and the estimated wavelet coefficients respectively. Observe that most of the empirical coefficients are shrunk to zero or near to zero and only the significative empirical coefficients, in magnitude, remain nonzero after shrinkage process. It minimizes the effect of noise and makes clearer the main features of the time series.

\begin{figure}[H]
\centering
\subfigure[IBOVESPA dataset\label{lognormal}]{
\includegraphics[scale=0.4]{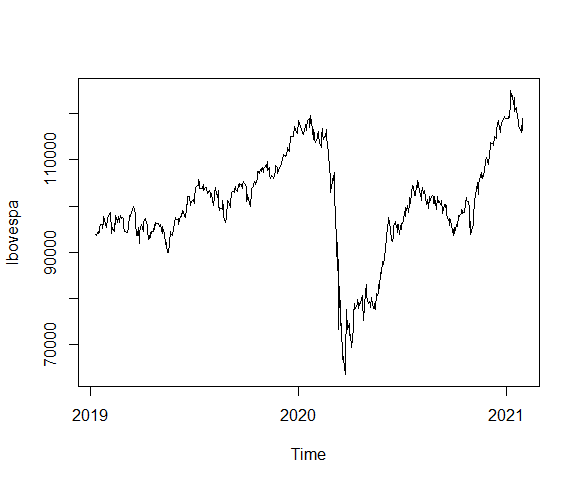}}
\subfigure[Empirical coefficients\label{blocls}]{
\includegraphics[scale=0.4]{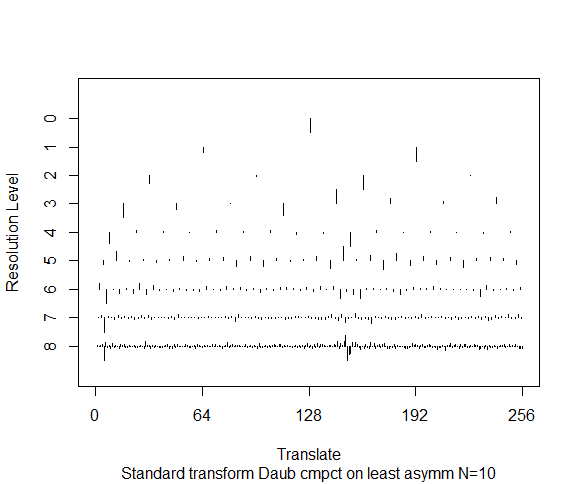}}
\caption{São Paulo stock market index (IBOVESPA) dataset from January, 1st 2019 to January, 29th 2021 (a) and the associated empirical wavelet coefficients by resolution level after a discrete wavelet trasnformation (b).}\label{fig:ibov}
\end{figure}

\begin{figure}[H]
\centering
\subfigure[Denoised data\label{lognormal}]{
\includegraphics[scale=0.4]{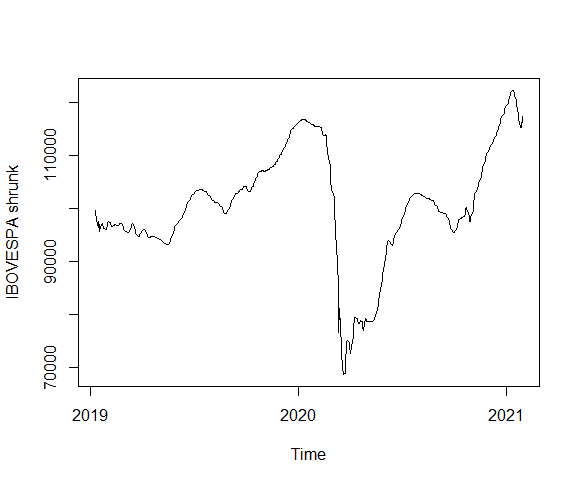}}
\subfigure[Estimated coefficients\label{blocls}]{
\includegraphics[scale=0.4]{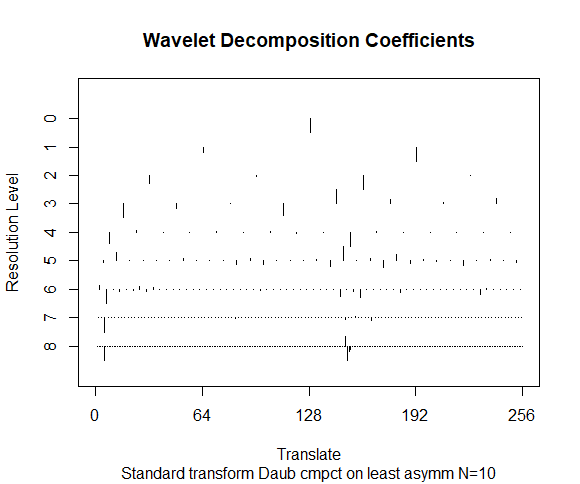}}
\caption{Denoised São Paulo stock market index (IBOVESPA) dataset from January, 1st 2019 to January, 29th 2021 after GSH shrinkage rule application (a) and the estimated wavelet coefficients by resolution level.(b).}\label{fig:ibov2}
\end{figure}

\section{Conclusions}
The simulation studies showed us that the shrinkage rule under GSH prior had a great performance in terms of averaged mean square error when compared with some of the most used wavelet shrinkage and thresholding techniques nowadays. Further, the advantage of controlling the level of shrinkage on the extreme values of the empirical coefficients through the kurtosis of the GSH distribution allows to include information about local features of the curve to be estimated such peaks, for example. These features can allow the proposed shrinkage rule be considered for practitioners for real data applications.     

\section{Acknowledgement}
This study was financed by the Coordenação de Aperfeiçoamento de Pessoal de Nível Superior – Brasil (CAPES).


\begin{thebibliography}{20}
\bibitem{abra1996}
  Abramovich, F. e Benjamini, Y.,
  (1996)
  \emph{Adaptive thresholding of wavelet coefficients}.
   Comp. Stat. Data Anal., 22, 351–361.

\bibitem{angelini2004}
  Angelini, C. e Vidakovic, B.
  (2004)
  \emph{Gama-minimax wavelet shrinkage: a robust incorporation of information about energy of a signal in denoising applications}.
   Statistica Sinica, 14, 103-125.

\bibitem{bhattacharya}
  Bhattacharya, A., Pati, D., Pillai, N. and Dunson, D.,
  (2015)
  \emph{Dirichlet-Laplace priors for optimal shrinkage}.
  J. Am. Statatist. Soc.,110 
 1479-1490.

\bibitem{cutillo}
  Cutillo, L. et. al.,
  (2008)
  \emph{Larger Posterior Mode wavelet Thresholding and Applications}.
   J. of Statistical Planning and Inference, 138, 3758-3773.    


\bibitem{donoho1994b}
  Donoho, D. L. and Johnstone, I. M.,
  (1994)
  \emph{Ideal spatial adaptation by wavelet shrinkage}.
  Biometrika, 81, 425–455. 

\bibitem{donoho1995}
  Donoho, D. L. e Johnstone, I. M.,
  (1995)
  \emph{Adapting to unknown smoothness via wavelet shrinkage}.
  J. Am. Statist. Ass., 90, 1200–1224.  
  
\bibitem{dojo1995}
   Donoho, D. L., Johnstone, I. M., Kerkyacharian, G., and Picard, D.
  (1995)
  \emph{Wavelet shrinkage: Asymptopia? (with discussion)}.
  J. R. Statist. Soc. B, 57, 301–369. 
 
 \bibitem{dojo1996}
   Donoho, D. L., Johnstone, I. M., Kerkyacharian, G., and Picard, D.
  (1996)
  \emph{Density estimation by wavelet thresholding}.
  Ann. Statist., 24, 508–539.
  
  \bibitem{figueiredo}
  Figueiredo, M.A.T and Nowak, R.D.,
  (2001)
  \emph{Wavelet-based image estimation: an empirical Bayes approach using Jeffrey's noninformative prior}.
   IEEE Transactions on Image Processing, 10, 1322-1331.  

\bibitem{griffin}
  Griffin,J. and Brown, P.,
  (2017)
  \emph{Hierarquical Shrinkage Priors for Regression Models}.
  Bayesian Analysis,12,Number 1,
 135-159.

\bibitem{karagiannis}
  Karagiannis,G.,Konomi, B. and Lin, G.
  (2015)
  \emph{A bayesian mixed shrinkage prior procedure for spatial-stochastic basis selection and evaluation of gPC expansion: applications to elliptic SPDEs}.
  J. Comp. Physics,284,
 528-546.

\bibitem{nason1996}
   Nason, G. P.
  (1996)
  \emph{Wavelet Shrinkage Using Cross-validation}.
  J. R. Statist. Soc. B, 58, 463–479. 
  
\bibitem{remenyi}
  Reményi, N. and Vidakovic, B.,
  (2015)
  \emph{Wavelet shrinkage with double weibull prior}.
  Communications in Statistics: Simulation and Computation, 44, 1, 88-104. 
 .
  
\bibitem{torka}
  Torkamani, R. and Sadeghzadeh, R.,
  (2017)
  \emph{Bayesian compressive sensing using wavelet based Markov random fields}.
  Signal Processing: Image Communication, 58, 
 65-72.
 
 \bibitem{vaughan}
  Vaughan,D.C.,
  (2006)
  \emph{The generalized secant hyperbolic distribution and its properties}.
  Communications in Statistics: Theory and Methods, 31(2), 219-238.
  
\bibitem{vidakovic98}
  Vidakovic, B.,
  (1998)
  \emph{Nonlinear wavelet shrinkage with Bayes rules and Bayes factors}.
  J. Am. Statist. Ass., 93,
 173–179.

\bibitem{vidakovic99a}
  Vidakovic, B.,
  (1999)
  \emph{Statistical Modeling by Wavelets}.
  Wiley, New York.

\end{thebibliography}
\end{document}